\documentclass[conference]{IEEEtran}
\IEEEoverridecommandlockouts
\usepackage{tabularx}
\usepackage{cite}
\usepackage{amsmath,amssymb,amsfonts,amsthm}
\let\oldnl\nl
\newcommand{\nonl}{\renewcommand{\nl}{\let\nl\oldnl}}
\usepackage{graphicx}
\usepackage{textcomp}
\usepackage{xcolor}
\usepackage[ruled,vlined,linesnumbered]{algorithm2e}
\usepackage{pgfplots}
\usepackage{tikz}
\usepgfplotslibrary{groupplots}
\pgfplotsset{compat=1.13}
\newlength\fwidth

\LinesNumbered
\usepackage{subcaption}

\SetKwRepeat{Do}{do}{while}%
\usepackage[colorlinks=true,  linkcolor=black, citecolor=blue, breaklinks=true, urlcolor=blue]{hyperref}
\def\BibTeX{{\rm B\kern-.05em{\sc i\kern-.025em b}\kern-.08em
    T\kern-.1667em\lower.7ex\hbox{E}\kern-.125emX}}
    
\newcommand{\E}{\mathbb{E}}                  
\newcommand{\bo}[1]{\mathbf{#1}}              
\newcommand{\bom}[1]{\boldsymbol{#1}}    
\newcommand{\tildeb}{\mathbf{c}}

\newcommand{\bmu}{\boldsymbol{\mu}}

\newcommand{\supp}{\mathsf{supp}}

\newcommand{\Gam}{\boldsymbol{\Gamma}}

\renewcommand{\Pi}{{\mathcal P}}
\newcommand{\hop}{\mathsf{H}}        

\newcommand{\iidsim}{\overset{iid}{\sim}}
\newcommand{\gfnc}{g}  
\newcommand{\ffnc}{f} 

\newcommand{\beq}{\begin{equation}}
\newcommand{\eeq}{\end{equation}}
\newcommand{\bmat}{\begin{pmatrix}}
\newcommand{\emat}{\end{pmatrix}}

\def\Gam{\mathbf{\Gamma}}

\newcommand{\R}{\mathbb{R}}

\newcommand{\X}{{\bf X}}
\newcommand{\SCM}{\mathbf{S}}

\newcommand{\A}{\mathbf{A}}

\newcommand{\Y}{\mathbf{Y}}

\newcommand{\B}{{\bo B}}
\renewcommand{\S}{\mathbf{S}} 

\newcommand{\sigmaw}{\sigma^2}

\newcommand{\C}{\mathbb{C}} 

\newcommand{\y}{\bo y}

\renewcommand{\b}{\mathbf{b}}

\newcommand{\x}{\bo x}

\renewcommand{\a}{{\bo a}}
\renewcommand{\c}{\mathbf{c}}

\newcommand{\M}{\bom \Sigma}

\DeclareMathOperator{\tr}{tr}
\DeclareMathOperator{\Tr}{tr}
\DeclareMathOperator{\diag}{diag}
\DeclareMathOperator{\cov}{cov}

\DeclareMathOperator{\Diag}{diag}

\newcommand{\bm}[1]{\mathbf{#1}}   
 \newcommand{\Hmat}{\mathsf{H}}

\newcommand{\gam}{\boldsymbol{\gamma}}   
  \makeatletter
\newcommand\notsotiny{\@setfontsize\notsotiny\@vipt\@viipt}
\makeatother
 \newcounter{ctheorem}
\newtheorem{theorem}[ctheorem]{Theorem}

\newcommand{\iter}{k}
\begin{document}

\title{Joint Activity Detection and Channel Estimation for Massive Random Access Using SBL and SCA
\thanks{The work of M. Esfandiari was supported by the Research Council of Finland under grant 368630, and D. Palomar in part by the Hong Kong GRF 16206123 research grant.}
}

\author{\IEEEauthorblockN{Esa Ollila\IEEEauthorrefmark{1}, Majdoddin Esfandiari\IEEEauthorrefmark{1}, and Daniel P. Palomar\IEEEauthorrefmark{2}}
\IEEEauthorblockA{\IEEEauthorrefmark{1}Department of Information and Communications Engineering,
Aalto University, Espoo, Finland\\
{\IEEEauthorrefmark{2}Department of Electronic and Computer Engineering, 
The Hong Kong University of Science and Technology, Hong Kong}\\
{Email: esa.ollila@aalto.fi, majdoddin.esfandiari@aalto.fi, palomar@ust.hk}} }

\maketitle

\begin{abstract}
In massive machine-type communication (mMTC) applications, a key challenge is 
joint device activity detection and channel estimation (JADCE) under grant-free random access, as a massive number of devices with sporadic traffic seek to connect to the base station. We address JADCE for massive random access using a covariance learning-based sparse Bayesian learning (SBL) approach. Specifically, we first use the successive convex approximation (SCA) framework to partially linearize the scaled negative log-likelihood function (LLF) of the data, then minimize it to estimate the sparse vector of devices' signal powers. After identifying active devices from these power estimates, empirical Bayesian estimation is used to obtain channel estimates. Simulation results demonstrate  the efficiency and performance superiority of the proposed CL-SCA method compared to other existing methods.       
\end{abstract}

\begin{IEEEkeywords}
Activity detection, channel estimation, sparse Bayesian learning (SBL), covariance learning, successive convex approximation (SCA)  
\end{IEEEkeywords}

\section{Introduction}
Joint device activity detection (AD) and channel estimation in grant-free random access is a key challenge in massive machine-type communication (mMTC) applications where a massive number of devices with sporadic data traffic wish to connect to the network in the uplink. Several studies have shown that covariance learning (CL-)based approaches can outperform conventional compressed sensing (CS-)based schemes \cite{haghighatshoar2018improved,chen2019covariance,fengler2021non,chen2021phase,marata2025activity,wang2025robust}.
Unlike traditional CS-based techniques, which require the number of active devices to be smaller than the pilot sequence length, covariance-based methods relax this constraint. In particular, \cite{fengler2021non,chen2021phase} demonstrate that covariance-based approaches can achieve comparable performance while using significantly shorter pilot sequences than their CS-based counterparts.
 
We consider an uplink, single-cell massive random access scenario with $N$ single-antenna machine-type devices (MTDs) communicating with a base station (BS) equipped with $M$ antennas.  For device identification, each device $n$ is preassigned a unique pilot (signature) sequence $\mathbf{a}_n = (a_{n1} \, \cdots \, a_{nL})^\top \in \mathbb{C}^{L}$ where $L$ is the pilot length.  Note that the pilot matrix $\A = (\mathbf{a}_1 \, \cdots \, \mathbf{a}_N) 
\in \C^{L\times N}$  collecting all pilot sequences as its column vectors is known at the BS. The device traffic is assumed to be sporadic, that is, in each coherence interval only $K \ll N$ devices are randomly active. 
In mMTC settings, the total number of devices is typically much larger than the pilot length, i.e., $N \gg L$. Let  $\alpha_n \in \{0, 1\}$~denote a binary indicator of device activity ($= 1$ when device is active and $= 0$ otherwise).  Let $\mathcal{M}$ denote the active user index set, i.e., $\mathcal{M} = \{n \in [\![ N ]\!] : \alpha_n = 1\}$ where $[\![ N ]\!]=\{1,\ldots,N\}$. Then, the received signal matrix  $\Y = (\y_1 \, \cdots \,\y_M) \in \C^{L \times M}$ over $L$ signal dimensions (symbols) and $M$ antennas can be expressed as  
\beq \label{eq:MMVmodel} 
\mathbf{Y}  = \sum_{n=1}^N \sqrt{\rho_n \beta_n}\alpha_n \a_n \mathbf{h}_n^{\top} + \mathbf{E} = \A \mathbf{X} + \mathbf{E},
\eeq
where $\rho_n$ is the $n^{\text{th}}$ device's uplink transmission power, $\beta_n$ is the large-scale fading component (LSFC) accounting for path-loss and shadowing, 
$\bm{h}_n \in \mathbb{C}^{M}$ represents the uplink channel vector for the $n^{\text{th}}$ device,  
$\mathbf{E} \in \C^{L \times M}$ is noise matrix whose elements $e_{lm}$ are i.i.d. with circular Gaussian distribution with variance $\sigma^2$ ($e_{lm} \sim \mathcal{C N}(0,\sigma^2)$ for $l \in [\![ L ]\!] $ and $m \in [\![ M ]\!] $), and 
$\X = (\tilde\x_1 \, \cdots \, \tilde\x_N)^\top \in \C^{N \times M}$ is the effective channel matrix whose (transposed) row vector 
\begin{align}
\tilde\x_{n} = \sqrt{\gamma_{n}} \mathbf{h}_{n}, \  n=1,\ldots, N,
\end{align}
is modeling the channel vector between the $n^{\text{th}}$ device and the BS,  and $\gamma_{n} = \alpha_n \rho_n \beta_n \geq 0$.  Since only $K \ll N$ devices are active, $\X$ is $K$-row-sparse, and thus $\Y$ in \eqref{eq:MMVmodel} follow the classic multiple measurement vectors  (MMV) model  \cite{duarte2011structured}.  The objective of AD is therefore to identify the indices of the active devices, i.e., the elements of the set $\mathcal{M}$, given the received signal~$\Y$, the pilot matrix~$\A$, and the noise power~$\sigma^2$. 

Under the uncorrelated Rayleigh small-scale fading and far-field assumptions, the channel coefficients can be treated as i.i.d. Gaussian,  both across antenna and symbol dimensions, i.e., $h_{nm} \iidsim  \mathcal {C N}(0,1)$ for $n \in [\![N]\!]$ and $m \in [\![M]\!]$. Since the elements of the noise matrix $\mathbf{E}$ are also i.i.d.,
the columns of $\Y$ for fixed $\gam$ can be modelled as i.i.d. random vectors drawn from circular Gaussian distribution, i.e., $\y_m \mid \gam  \sim \mathcal{C N}_{L}(\mathbf{0}, \M)$. The positive definite Hermitian (PDH) covariance matrix $\M=\cov(\y_m \mid \gam) \in \C^{L \times L}$ has the form 
\beq \label{eq:M}
\M = \A \Gam \A^\hop + \sigma^2 \mathbf{I}  =  \sum_{n=1}^N \gamma_n \a_n \a_n^\hop + \sigma^2 \mathbf{I}  
\eeq
where the signal covariance matrix $\Gam =  \diag(\gam)$ with $\gam = (\gamma_1\,\cdots\,\gamma_N)^\top \in \mathbb{R}^N_{\geq 0}$.   As already noted, since  only $K$ devices are active, 
 $\gam$ is a sparse vector of signal powers with only $K = | \mathcal M |$ non-zero elements.  

In this paper, we develop a CL-based method for joint activity detection and channel estimation (JADCE) for massive random access. We exploit the successive convex approximation (SCA) framework \cite{scutari2013decomposition} to first estimate the sparse vector of signal powers $\gam$ by minimizing the negative log-likelihood function (LLF) of the data $\Y$. From the estimated $\gam$, the elements of the active user index set $\mathcal M$ are then identified. Finally, empirical Bayesian estimation is employed to obtain uplink channel estimates of active devices. Simulation
results demonstrate that the proposed CL-SCA method is relatively fast and provides state-of-the-art JADCE performance among optimization-based competitors. 

\section{Covariance learning and SBL}

\subsection{Activity detection using covariance learning} 

Let  $\phi(\x ; \bmu, \M) =  | \M |^{-1}  \exp(- [\x-\bmu]^\hop \M^{-1} [\x-\bmu])$ denote the probability density function  (up to an irrelevant normalizing constant) of the circular Gaussian distribution with mean $\bmu$ and PDH covariance matrix $\M$. Assume $\sigma^2$ is known. Then, the scaled negative LLF of the data $\Y $ is 
\beq
\begin{aligned}  \label{eq:SML}
\ell( \boldsymbol{\gamma})   &= \frac{1}{M} \log \Big(\prod_{m=1}^M \phi(\y_m ; \mathbf{0}, \M) \Big) \\
 &= \tr(  (\A \Gam \A^\hop + \sigmaw \mathbf{I})^{-1} \SCM ) +  \log  |  \A \Gam \A^\hop + \sigmaw \mathbf{I} |
\end{aligned}
\eeq
where $\M$ is given by \eqref{eq:M}, $\SCM$ is  the  $L\times L$ sample covariance matrix (SCM), defined as 
\[
\SCM = \frac{1}{M} \sum_{m=1}^{M} \y_{m} \y_m^\hop =M^{-1} \mathbf{Y} \mathbf{Y}^\hop, 
\]
and $\tr(\cdot)$ and $| \cdot |$ denote the matrix trace and determinant.  We thus need to solve 
\begin{equation} 
  \begin{array}{ll}
  \underset{\gam}{\text{minimize}} \  \ell( \boldsymbol{\gamma})   \quad   \text{subject to}   \qquad \gam \ge \mathbf{0}, 
  \end{array}
  \label{eq:orig_formulation}
\end{equation}
Note that the first and second terms in the objective are convex and concave, respectively, and therefore the overall objective is nonconvex.
The active indices are obtained using the estimate $\hat{\gam}$ obtained by 
solving \eqref{eq:orig_formulation} 
and either of the rules:
\begin{itemize}
    \item The \emph{threshold-rule}:  
\beq \label{eq:threshold_rule}
\hat{\alpha}_n = \mathsf{1}{\{\hat \gamma_n \geq \gamma_{\text{thr}} \}} = \begin{cases} 1, & \text{if $\hat \gamma_n \geq \gamma_{\text{thr}}$ } \\ 0 & \text{otherwise} \end{cases} 
\eeq
where $\gamma_{\text{thr}}$ is a fixed threshold. 
\item  The \emph{top $K$-rule}:
\beq \label{eq:topKrule}
\hat \alpha_n = \mathsf{1}\!\left\{ \gamma_n \text{ is among the $K$ largest entries of } \hat{\gam} \right\}
\eeq
where $\mathsf{1}\{ \cdot \}$ is the indicator function and $K$ is given.
\end{itemize}
The latter requires knowledge (or an estimate) of the number of active devices $K= | \mathcal M|$ while the former requires calibrating a suitable threshold, either using a priori knowledge or via estimation. 
The threshold can be designed such that a desired trade-off between the probabilities of missed detection and false alarm is achieved.

\subsection{Joint activity detection and channel estimation via M\text{-}SBL} 

The M\text{-}SBL algorithm~\cite{wipf2007empirical}  is an iterative sparse signal reconstruction method for the MMV model. It also solves the CL problem in \eqref{eq:SML} and \eqref{eq:orig_formulation} to estimate $\gam$ via the EM algorithm. In our application context, M\text{-}SBL enables JADCE, since it provides not only an estimate of $\gam$, but also the posterior moments of the row-sparse source signal $\X$. The former is typically used to identify active devices via \eqref{eq:threshold_rule} or \eqref{eq:topKrule}, while the latter is used to estimate $\X$ (the effective channel matrix in our application context).
Under the stated assumptions,  the posterior distribution of $\X$ factorizes across antennas as 
$
p(\X \mid \Y, \gam)
= \prod_{m=1}^{M} \phi\!\big(\x_m;\, \bmu_m, \M_{x}\big),
$
where $\x_m \in \C^{N}$ denotes the $m^{\text{th}}$ column vector of $\X =(\x_1 \, \cdots \,\x_M)$. 
The posterior depends on   the conditional mean vectors  $\bmu_m \in \mathbb{C}^N$ of $\x_m$-s as well as  their joint posterior covariance matrix, $\M_x \in \mathbb{C}^{N \times N}$, defined by 
\begin{align*} 
&\E[\X \mid \Y, \gam]
= (\bmu_1 \, \cdots \, \bmu_M)  
= \Gam \A^\hop \M^{-1} \Y, \\ 
&\M_x = \cov(\x_m \mid \y_m, \gam)=  \Gam - \Gam \A^\hop \M^{-1} \A \Gam,
\end{align*} 
where $\M$ is defined by \eqref{eq:M}.
The M\text{-}SBL algorithm~\cite{wipf2007empirical} employs an empirical Bayesian (Type-II) approach to estimate the posterior distribution. 
In the Type-II framework, rather than placing a prior on the hyperparameters~$\gam$, one estimates them directly from the data. This is accomplished by minimizing the scaled negative LLF of $\Y$  given by $\ell(\gam)$ in~\eqref{eq:SML} subject to non-negativity constraint.
 After obtaining $\hat{\gam}$ by solving \eqref{eq:orig_formulation}, the posterior source distribution is computed via the plug-in principle as \(p(\X \mid \Y, \hat{\gam})\), and
\[
\hat{\X} = \hat{\Gam} \A^\hop \hat{\M}^{-1} \Y, 
\qquad
\hat{\M}_x = \hat{\Gam} - \hat{\Gam}\,\A^\hop \hat{\M}^{-1} \A \hat{\Gam},
\]
will serve as the empirical Bayes estimates of the posterior mean vectors and covariance matrix, respectively. Note that $\hat{\Gam} = \diag(\hat{\gam})$ and $\hat{\M} = \A \hat{\Gam} \A^{\hop} + \sigma^2 \mathbf{I}$. The algorithm for JADCE using M\text{-}SBL framework is given in \autoref{alg:SBL-JADCE}.

M-SBL minimizes the CL (i.e., Type-II likelihood) objective function $\ell(\gam)$ by constructing an EM algorithm that treats
$(\y_m,\x_m)$, $m=1,\ldots,M$, as the complete data.  Unfortunately, it is well-known that the developed EM algorithm suffers from slow convergence, making M\text{-}SBL  impractical for applications where $N$ is large. In this paper, we develop an efficient algorithm for solving \eqref{eq:orig_formulation} using the SCA framework \cite{scutari2013decomposition}. 
The proposed CL‑SCA algorithm converges significantly faster than the EM algorithm of M\text{-}SBL and also has smaller run times compared to the popular coordinate‑wise optimization (CWO) method \cite{fengler2021non}. Compared to a greedy CL-based approach such as CL matching pursuit (CL-MP) \cite[Algorithm~1]{marata2025activity} or CL-OMP \cite{ollila2024matching}, the proposed CL-SCA algorithm may have a higher runtime when $K$ is known and  $K$ is smaller than the number of iterations CL-SCA requires to terminate. This idealized scenario, which favors greedy methods like CL-MP, is examined in the simulation section.

\begin{algorithm}[!t]
\caption{Generic JADCE scheme based on the M\text{-}SBL framework \cite{wipf2007empirical}}\label{alg:SBL-JADCE}
\DontPrintSemicolon
\SetKwInOut{Input}{Input} 
\SetKwInOut{Output}{Output}
\SetKwInOut{Init}{Initialize}
\SetNlSkip{1em}
\SetInd{0.5em}{0.5em}
\Input{$\mathbf{Y}$, $\A$, $\sigma^2$, $K$ or $\gamma_{\text{thr}}$.} 

Solve \eqref{eq:orig_formulation}  to obtain $\hat \gam \in \mathbb{R}^N_{\geq 0}$ using an algorithm of your choice.

Compute estimate of activity indices $\hat{\boldsymbol{\alpha}}\! = \!(\alpha_1\,\cdots\,\alpha_N)^{\top} \!\in \! \{0,1\}^N$ using either the threshold rule in  \eqref{eq:threshold_rule} or the $K$-top rule in \eqref{eq:topKrule}.   

Prune the signal power estimates as  $\hat{\gam}  \gets \hat{\gam} \odot \hat{\boldsymbol{\alpha}}$, where $\odot$ denotes the Hadamard (element-wise) product. 

Compute effective channel estimate 
$
\hat{\X} = \hat{\Gam} \A^\hop \hat{\M}^{-1} \Y, 
$
where $\hat{\Gam} = \diag(\hat{\gam})$ and $\hat{\M} = \A \hat{\Gam} \A^{\hop} + \sigma^2 \mathbf{I}$. 

 \Output{$\hat{\boldsymbol{\alpha}}$ (or support $\hat{\mathcal M} = \supp(\hat{\boldsymbol{\alpha}})$), $\hat \X$, $\hat{\gam}$.} 
\end{algorithm}

\section{CL-SCA algorithm} 

 Under the  SCA framework, each agent in the optimization focuses on a single variable, $\gamma_1, \gamma_2, \dots, \gamma_N$, which are updated iteratively and in parallel. In other words, all $N$ variables are updated simultaneously at each iteration.

We begin with expressing the objective function in \eqref{eq:SML} as the sum of a convex term and a nonconvex term (to be linearized), as follows:
\beq
\label{eq:conv+nonconv}
\ell(\gam) =  \gfnc(\gam) +  \ffnc(\gam),
\eeq
where $g : \mathbb{R}^N_{\geq 0} \to \R$ is a convex continuously differentiable in $\mathbb{R}^N_{\geq 0}$ 
and $f : \mathbb{R}^N_{\geq 0} \to \R$ is a nonconvex function, 
 defined as 
 \begin{align}
   \gfnc(\gam) &= \Tr ( (\bm{A}\Diag(\gam)\A^\hop + \sigma^2\,\bm{I})^{-1}\,\S ) \\ 
   \ffnc(\gam) &= \log |\bm{A}\Diag(\gam)\bm{A}^\Hmat + \sigma^2\,\bm{I} |. 
 \end{align}
We define $\M_{\backslash i} = \sum_{n \neq i}  \gamma_n   \a_n \a_n^\hop + \sigmaw \mathbf{I}  = \M - \gamma_i \a_i \a_i^{\hop}$ as the covariance matrix of  $\y_m$-s when the contribution of the $i^{\text{th}}$ source signal is removed, and $\tildeb_i =  \M^{-1}_{\backslash i}  \a_i$ and $\b_i =  \M^{-1}  \a_i$ for $i=1,\ldots, N$. The convex term can therefore be written as
\begin{align}
  \gfnc(\gam)
  &= \Tr\left(\big(\gamma_i\,\bm{a}_i\bm{a}_i^\Hmat + \bm{\Sigma}_{\backslash i}\big)^{-1}\,\bm{S} \right) \notag \\
  &= \Tr(\bm{\Sigma}_{\backslash i}^{-1}\bm{S}) - \frac{\gamma_i}{1 + \gamma_i\,\bm{a}_i^\Hmat \bm{\Sigma}_{\backslash i}^{-1} \bm{a}_i}\Tr \big(\bm{\Sigma}_{\backslash i}^{-1}\bm{a}_i\bm{a}_i^\Hmat\bm{\Sigma}_{\backslash i}^{-1}\bm{S}\big) \notag\\
  &= \Tr\big(\bm{\Sigma}_{\backslash i}^{-1}\bm{S}\big) - \frac{\gamma_i \bm{a}_i^\Hmat\bm{\Sigma}_{\backslash i}^{-1}\bm{S}\bm{\Sigma}_{\backslash i}^{-1}\bm{a}_i}{1 + \gamma_i\,\bm{a}_i^\Hmat \bm{\Sigma}_{\backslash i}^{-1} \bm{a}_i} \notag  \\
  &= \Tr\big(\bm{\Sigma}_{\backslash i}^{-1}\bm{S}\big) - \frac{\gamma_i \c_i^\hop \S \c_i}{1 + \gamma_i\,\a_i^\Hmat \c_i}\label{eq:f_cvx_gammai}
\end{align}
where we have used the Sherman-Morrison formula\footnote{
$(\A + \mathbf{u} \mathbf{v}^\hop)^{-1}=\A^{-1} - \A^{-1} \mathbf{u} \mathbf{v}^\hop \A^{-1}/(1 + \mathbf{v}^\hop \A^{-1} \mathbf{u})$}. At iteration $k+1$, by following the SCA framework and linearizing $f(\gam)$, the $N$ functions  to be minimized by
 the $N$ agents in parallel  are given as
\begin{align}    
\label{eq:tilde_f}
\tilde{\ell}_i(\gamma_i \mid \gam^{\iter}) 
&= \gfnc\big(\gamma_i, \gam_{\backslash i}^{\iter} \big) + \nabla_{\gamma_i} \ffnc\big(\gam^{\iter} \big)\,(\gamma_i - \gamma_i^{\iter}) \nonumber \\
&= - \frac{\gamma_i [\c_i^{\iter}]^\Hmat \S \c_i^{\iter}}{1 + \gamma_i\,\a_i^{\hop} \c_i^{\iter}} + \gamma_i\,\bm{a}_i^{\hop} \b_i^{\iter} + \text{const}\, ,
\end{align}
where $i \in [\![ N ]\!]$, $\gam^{k} =(\gamma_1^{k},\ldots ,\gamma_N^{k})^\top$, 
$\c_i^{\iter} = 
(\M^{\iter} - \gamma_i^{\iter}  \a_i \a_i^{\hop})^{-1}  \a_i$, $\M^k= \A \diag(\gam^k) \A^{\hop} + \sigma^2 \mathbf{I}$, and $\b_i^{\iter} 
= \big(\M^{\iter} \big)^{-1} \a_i $.
 The solution to this optimization problem is given below. 

\begin{theorem}  \label{th:SCA} For $i \in \{1,\ldots, N\}$, the minimizer of the  convex function $\tilde{\ell}_i(\gamma_i \mid \gam^{\iter})$ is 
\begin{align}  
\hat{\gamma_i}(\gam^{k}) &= \arg \min_{\gamma_i \geq 0} \,  \tilde{\ell}_i(\gamma_i \mid \gam^{\iter}) \notag  \\ 
&= \left[ \gamma_i^{\iter}  + \sqrt{\frac{ [\b_i^{\iter}]^{\hop} \SCM \b_i^{\iter}}{ \big(\a_i^\hop \b_i^{\iter} \big)^3 }}  -   \frac{1}{\a_i^\hop \b_i^{\iter}}\right]_+  \label{eq:SCA_optim_gammai}
\end{align} 
where $[a]_+=\max(a,0)$.
\end{theorem}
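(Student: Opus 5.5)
We reduce the problem to a one-dimensional convex minimization. We then express its stationary point, which is available in closed form, in terms of the quantities $\b_i^{\iter}$ computed from the full covariance $\M^{\iter}$.

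\emph{Reduction to a scalar problem.} Put $c = \a_i^\hop \c_i^{\iter} > 0$, $s = [\c_i^{\iter}]^\hop \S \c_i^{\iter} \ge 0$ and $d = \a_i^\hop \b_i^{\iter} > 0$. Positivity holds because $\M^{\iter}_{\backslash i}$ and $\M^{\iter}$ are PDH. Then, up to a constant,
\[
\tilde{\ell}_i(\gamma_i \mid \gam^{\iter}) = -\frac{\gamma_i s}{1+\gamma_i c} + \gamma_i d .
\]
Since $-\gamma s/(1+\gamma c) = -s/c + (s/c)/(1+\gamma c)$, the first term is convex on $\gamma \ge 0$, and the second term is linear. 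Hence $\tilde{\ell}_i$ is convex there.

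\emph{Locating the minimizer.} The derivative is
\[
-\frac{s}{(1+\gamma c)^2} + d ,
\]
which is nondecreasing in $\gamma$. Setting it to zero gives $1+\gamma c = \sqrt{s/d}$, that is,
\[
\gamma^\star = \frac{\sqrt{s/d} - 1}{c}.
\]
If $\gamma^\star < 0$, the derivative is nonnegative on all of $[0,\infty)$, so the constrained minimizer is $0$. In every case the minimizer is therefore $[\gamma^\star]_+$.

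\emph{Re-expression in terms of $\M^{\iter}$.} Write $\gamma = \gamma_i^{\iter}$. Since $\M^{\iter} = \M^{\iter}_{\backslash i} + \gamma \a_i\a_i^\hop$, the Sherman--Morrison formula gives
\[
\b_i^{\iter} = \frac{\c_i^{\iter}}{1+\gamma c}, \qquad d = \frac{c}{1+\gamma c}.
\]
Inverting the second relation yields $c = d/(1-\gamma d)$ and $1+\gamma c = 1/(1-\gamma d)$. Here $1-\gamma d = 1/(1+\gamma c) > 0$.

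We then substitute, writing $t = [\b_i^{\iter}]^\hop \S \b_i^{\iter}$:
\begin{itemize}
\item $s = (1+\gamma c)^2\, t$;
\item $\sqrt{s/d} = (1+\gamma c)\sqrt{t/d}$;
\item $\gamma^\star = \dfrac{(1+\gamma c)\sqrt{t/d} - 1}{c}$.
\end{itemize}
Using $(1+\gamma c)/c = 1/d$ and $1/c = (1-\gamma d)/d = 1/d - \gamma$, this becomes
\[
\gamma^\star = \frac{\sqrt{t/d}}{d} - \frac{1}{d} + \gamma = \gamma_i^{\iter} + \sqrt{\frac{t}{d^3}} - \frac{1}{d},
\]
which is \eqref{eq:SCA_optim_gammai} after applying $[\cdot]_+$.

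The main obstacle is this last algebraic conversion from the leave-one-out quantities $\c_i^{\iter}$ to the full-covariance quantities $\b_i^{\iter}$. Keeping the sign bookkeeping straight relies on $1-\gamma_i^{\iter} d > 0$. The degenerate case $s = 0$ needs only a brief remark: then $\gamma^\star = -1/c < 0$ and the minimizer is $0$, in agreement with the formula, which also returns $[\gamma_i^{\iter} - 1/d]_+ = 0$ since $\gamma_i^{\iter} d < 1$.
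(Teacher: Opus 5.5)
Your proposal is correct and follows essentially the same route as the paper: solve the stationarity equation in the leave-one-out quantities $\mathbf{c}_i^{k}$, clip at zero using the monotonicity of the derivative, and then use Sherman--Morrison, $\mathbf{b}_i^{k} = \mathbf{c}_i^{k}/(1+\gamma_i^{k}\mathbf{a}_i^{\mathsf{H}}\mathbf{c}_i^{k})$, to rewrite the root in terms of $\mathbf{b}_i^{k}$. Your explicit convexity check, your justification for taking the positive square root, and your treatment of the case $s=0$ add small pieces of rigor that the paper leaves implicit.
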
 

\begin{proof} We provide the proof in 
in Appendix~\ref{app:th:SCA}.
\end{proof} 

Finally, $\gam^{k+1}$ is obtained via a smoothing operation, which is necessary to guarantee convergence in accordance with the SCA framework \cite{scutari2013decomposition}, as 
\begin{align}
\gam^{\iter+1} &= \big(1 - \eta^k \big)\gam^{\iter} + \eta^k\,\hat{\gam}\left(\gam^k\right) \nonumber \\ 
&=  \gam^{\iter} + \eta^k\, \left(\hat{\gam}(\gam^k) - \gam^{\iter} \right)  \label{eq:SCA_smoothing2} 
\end{align}
where $\hat \gam(\gam^k) = (\hat \gamma_1(\gam^k)\, \cdots\, \hat \gamma_N(\gam^k))^\top$ is formed by the solutions of \eqref{eq:SCA_optim_gammai}, while the step-size sequence $\eta^k \in (0,1]$ is introduced to control the length of the update along
the direction $\mathbf{d}^k = \hat{\gam}(\gam^k) - \gam^k$. It   can be conveniently chosen as a diminishing step-size, verifying 
 $\sum_{k=0}^{\infty} \eta^k = \infty$ and $\sum_{k=0}^\infty (\eta^k)^2 < + \infty$ \cite{scutari2013decomposition,scutari2018parallel}. We use $\eta^k = \eta^{k-1}\,\big(1 - \epsilon\,\eta^{k-1}\big)$, where $\epsilon \in (0,1)$ is a given constant, and $\eta^0 <1/\epsilon$. We set $\eta^0=0.99$ and $\epsilon=0.05$ as default values and use $\mathbf{\gam}^0 = \mathbf{0}$ as initial start. The CL-SCA algorithm is tabulated in \autoref{alg:SCA}. As a termination criterion, we use 
\beq \label{eq:term_CLSCA}
\| \mathbf{d}^k \|_2 = \| \hat \gam(\gam^k) - \gam^k\|_2 \leq \delta
\eeq
for an appropriate convergence threshold $\delta$  (default $\delta = 10^{-3}$),  as this directly tests how close we are to a fixed point of the SCA mapping, i.e., to stationarity. We wish to point out that all the elements are calculated in parallel and that the SCA algorithm is guaranteed to converge to the stationary point of the original problem.  The code is available at \url{https://github.com/esollila/CL-SCA}. 

\begin{algorithm}[!h]
 \caption{\textsf{CL-SCA} algorithm}\label{alg:SCA}
\DontPrintSemicolon
\SetKwInOut{Input}{Input} 
\SetKwInOut{Output}{Output}
\SetKwInOut{Init}{Initialize}
\SetNlSkip{1em}
\SetInd{0.5em}{0.5em}
\Input{$\mathbf{S}$, $\A$, $\sigma^2$, step-size sequence $\{\eta^k\}$
} 
\Init{$I_{\max}=50$, $\gam^0 = \mathbf{0}$,  $\delta= 10^{-3}$} 
 
 \BlankLine 
\For{$k =0,1,\ldots,I_{\max}-1$}{

\BlankLine 
Compute $\M^k= \A \diag(\gam^k) \A^{\hop} + \sigma^2 \mathbf{I}$, and $\B^k = \big( \M^k \big)^{-1} \A = (\b_1^{\iter} \, \cdots \,\b_N^{\iter}) $. 

For all $i \in \{1,\ldots, N\}$ solve in parallel: 
\[ 
\hat{\gamma_i}(\gam^{k}) = \left[ \gamma_i^{\iter}  + \sqrt{\frac{ [\b_i^{\iter}]^{\hop} \SCM \b_i^{\iter}}{ \big(\a_i^\hop \b_i^{\iter} \big)^3 }}  -   \frac{1}{\a_i^\hop \b_i^{\iter}}\right]_+ 
\]

 \BlankLine
$\mathbf{d}^k = \hat{\gam}(\gam^k) - \gam^k$
\BlankLine
$ 
 \gam^{\iter+1} =\gam^{\iter} + \eta^k\,\mathbf{d}^k 
$
 \BlankLine
\If{  $\| \mathbf{d}^k  \|_2 < \delta$}{\nonl \textsf{break} } 
 
 }
 \Output{$\hat{\boldsymbol{\gamma}}=\gam^{\iter+1}$} 
\end{algorithm}

The computational cost of steps~2  of CL-SCA is  $\mathcal{O}(N L^2) +    \mathcal{O}\!\left(L^{3} + L^{2} N\right)$ while the cost of step~3 is 
$\mathcal{O}(NL^2)$.  Computational complexity of Step 4-6 are  $\mathcal{O}(N)$. Since $N \geq L$ in our application, the complexity  of one iteration of CL-SCA algorithm is $\mathcal{O}(N L^2)$.    This is the same complexity per iteration as CWO \cite[Algorithm~1]{fengler2021non} and CL-MP \cite[Algorithm~1]{marata2025activity}.  

  The input to CL-based approaches is  $\mathbf{S}$ (the SCM) with the computational complexity of $\mathcal{O}\!\left(M L^2\right)$. 
Thus, the overall complexities of CL-SCA and CWO are  $\mathcal{O}(N L^2 I_{\text{iters}} + M L^2)$, where $I_{\text{iters}}$ is the number of iterations until convergence. Note that CL-MP is a greedy CL-based approach, terminated after $K$ iterations (when $K$ is known), and hence its overall complexity is  $\mathcal{O}(N L^2 K + ML^2)$. Thus CL-MP provides faster computations only in the ideal case when $K$ is known and $K<I_{\text{iters}}$.

\section{Simulation study } 

\subsection{Set-up} 

\noindent{\bf Performance metrics:}  Let $\hat{\mathcal M}$ and $\mathcal{M}$ denote the estimated and actual active user index sets, respectively. To evaluate the activity detection performance, we use  the probability of missed detection ($P_{\text{MD}}$):
$$
P_{\mathrm{MD}}
\triangleq
\mathbb{E}\!\left[ 
\frac{|\mathcal{M} \setminus \hat{\mathcal{M}}|}{|\mathcal{M}|}
\right]
$$
where the expectation is taken over the channel realizations and noise, and $\mathcal{M} \setminus \hat{\mathcal{M}}$ represents the set of elements in $\mathcal{M}$ that are not in $\hat{\mathcal{M}}$.  
We consider the case where the number of active devices $|\mathcal{M}|=K$ is known.  Accordingly, we will use the top-$K$ rule in \eqref{eq:topKrule} in \autoref{alg:SBL-JADCE}.
To investigate the  channel estimation performance, we use the normalized mean square error (NMSE):
\[
\text{NMSE} = \mathbb{E} \left[ \frac{  \| \hat \X - \X \|_{\text{F}}^2}{  \|  \X \|_{\text{F}}^2 }  \right]
\]
where $\| \cdot \|_{\text{F}}$ denotes the Frobenius matrix norm. 

\noindent {\bf Compared methods:} To evaluate the performance of the JADCE framework (\autoref{alg:SBL-JADCE}), we compare four methods used to solve \eqref{eq:orig_formulation} in step~1 of \autoref{alg:SBL-JADCE}, namely the proposed CL-SCA method (\autoref{alg:SCA}), the original EM algorithm of M-SBL \cite{wipf2007empirical}, CWO \cite[Algorithm~1]{fengler2021non}, and CL-MP \cite[Algorithm~1]{marata2025activity}. We stress here that step~4 of \autoref{alg:SBL-JADCE} is used to obtain empirical channel estimates for all the methods tested.

\noindent {\bf Setting:}  Normalized Bernoulli pilot sequences with unit power per symbol are used, i.e., $a_{ln} \in \{ \pm \frac{1}{\sqrt{2}} \pm  \jmath \frac{1}{\sqrt{2}} \}$, and  $\|\a_n\|_2^2=L$.
 The support $\mathcal M$ is randomly chosen from ${1,\ldots,N}$ without replacement for each Monte-Carlo (MC) trial.
The number of MC trials is $10000$. The number of MTDs is  $N = 300$. The LSFCs ($\beta_n$-s in \eqref{eq:MMVmodel}) are uniformly distributed between \([-15,0]\) in dB scale. 
 
\subsection{Discussion of  results} 

\begin{figure}[!t]
\begin{subfigure}{\columnwidth}
\hspace{3pt}
\centerline{\begin{tikzpicture}

\definecolor{crimson2143940}{RGB}{214,39,40}
\definecolor{darkgray176}{RGB}{176,176,176}
\definecolor{darkorange25512714}{RGB}{255,127,14}
\definecolor{forestgreen4416044}{RGB}{44,160,44}
\definecolor{steelblue31119180}{RGB}{31,119,180}

\begin{axis}[
axis line style={semithick},
width=0.9\textwidth,
height=6.8cm,
ylabel style={yshift=-3pt, font=\footnotesize} ,
xlabel style={yshift=3pt,font=\footnotesize},
tick label style={font=\tiny} , 
title style={font=\footnotesize},
ylabel style={yshift=-2pt},
  ytick distance=0.2, 
log basis y={10},
tick pos=left,
x grid style={darkgray176},
xmajorgrids,
xmin=17, xmax=83,
xminorgrids,
xtick style={color=black},
y grid style={darkgray176},
ylabel={Probability of missdetection},
ymajorgrids,
ymin=9.60023595175981e-06, ymax=0.176233168486847,
yminorgrids,
ymode=log,
ytick style={color=black},
ytick={1e-07,1e-06,1e-05,0.0001,0.001,0.01,0.1,1,10},
yticklabels={
 \( \displaystyle {10^{-7}}\),
  \(\displaystyle {10^{-6}}\),
  \(\displaystyle {10^{-5}}\),
  \(\displaystyle {10^{-4}}\),
  \(\displaystyle {10^{-3}}\),
  \(\displaystyle {10^{-2}}\),
  \(\displaystyle {10^{-1}}\),
  \(\displaystyle {10^{0}}\),
  \(\displaystyle {10^{1}}\)
},
legend style={at={(0,0)}, anchor=south west, legend cell align=left, align=left, draw=white!15!black,font=\tiny,opacity=0.99, row sep=-3.5pt,inner sep=1pt},
legend image post style={xscale=0.6,yscale=0.75},
legend columns=6,
transpose legend,
]
\addplot [thick, steelblue31119180, mark=*, mark size=1.9, mark options={solid}]
table {%
20 0.0240850000000013
30 0.0075299999999998
40 0.00241000000000002
50 0.000789999999999998
60 0.00022
70 8e-05
80 1.5e-05
};
\addlegendentry{SCA $K=20$}
\addplot [thick, steelblue31119180,  mark=square*, mark size=1.9, mark options={solid}]
table {%
20 0.0434000000000039
30 0.0205666666666658
40 0.0100099999999997
50 0.00459666666666667
60 0.00202333333333335
70 0.000979999999999998
80 0.000476666666666666
};
\addlegendentry{$K=30$}

\addplot [thick, steelblue31119180, mark=triangle*, mark size=2.5, mark options={solid}]
table {%
20 0.0712350000000005
30 0.0415999999999994
40 0.0256750000000027
50 0.0161175000000014
60 0.0101100000000004
70 0.00651999999999978
80 0.00411249999999988
};
\addlegendentry{$K=40$}
\addplot [thick, darkorange25512714,  dashed, mark=o, mark size=1.9, mark options={fill=none,solid}]
table {%
20 0.0239300000000013
30 0.00750999999999981
40 0.00238000000000002
50 0.000769999999999998
60 0.00022
70 8e-05
80 1.5e-05
};
\addlegendentry{CWO $K=20$}
\addplot [thick, darkorange25512714, dashed, mark=square, mark size=1.9, mark options={fill=none,solid}]
table {%
20 0.0433000000000039
30 0.0204166666666658
40 0.00993999999999973
50 0.00457333333333334
60 0.00198333333333335
70 0.000969999999999998
80 0.000476666666666666
};
\addlegendentry{$K=30$}
\addplot [thick, darkorange25512714,  dashed, mark=triangle, mark size=1.9, mark options={fill=none,solid}]
table {%
20 0.0711400000000005
30 0.0413774999999993
40 0.0255225000000027
50 0.0159800000000014
60 0.0099950000000004
70 0.00644249999999976
80 0.00406499999999988
};
\addlegendentry{$K=40$}
\addplot [semithick, forestgreen4416044, mark=*, mark size=1.9, mark options={fill=none,solid}]
table {%
20 0.021620000000001
30 0.00637999999999987
40 0.00192500000000001
50 0.000684999999999998
60 0.000225
70 8.5e-05
80 3e-05
};
\addlegendentry{CL-MP $K=20$}

\addplot [semithick, forestgreen4416044,  mark=square*, mark size=1.9, mark options={fill=none,solid}]
table {%
20 0.0408900000000034
30 0.0188699999999993
40 0.00880999999999979
50 0.00406666666666669
60 0.00198666666666668
70 0.000843333333333331
80 0.000499999999999999
};
\addlegendentry{$K=30$}
\addplot [semithick, forestgreen4416044, mark=triangle*, mark size=2.5, mark options={fill=none,solid}]
table {%
20 0.0714425000000003
30 0.0403799999999994
40 0.0246275000000025
50 0.0155425000000012
60 0.00950500000000031
70 0.00624249999999977
80 0.00403499999999988
};
\addlegendentry{$K=40$}
\addplot [semithick, crimson2143940, mark = o, mark size=1.9, mark options={fill=none,solid}] 
table {%
20 0.0234650000000013
30 0.00726499999999982
40 0.00222000000000002
50 0.000769999999999998
60 0.000205
70 8e-05
80 1.5e-05
};
\addlegendentry{EM $K=20$}
\addplot [semithick, crimson2143940,  mark=square, mark size=1.9, mark options={fill=none,solid}]
table {%
20 0.0427300000000038
30 0.0200333333333325
40 0.00969666666666641
50 0.00441333333333335
60 0.00191000000000001
70 0.000909999999999998
80 0.000449999999999999
};
\addlegendentry{$K=30$}
\addplot [semithick, crimson2143940,  mark=triangle, mark size=2.5, mark options={fill=none,solid}] 
table {%
20 0.0707050000000007
30 0.0410999999999993
40 0.0253175000000027
50 0.0157950000000014
60 0.00977250000000037
70 0.00634999999999976
80 0.00396749999999989
};
\addlegendentry{$K=40$}
\end{axis}

\end{tikzpicture}}
\centerline{\begin{tikzpicture}

\definecolor{crimson2143940}{RGB}{214,39,40}
\definecolor{darkgray176}{RGB}{176,176,176}
\definecolor{darkorange25512714}{RGB}{255,127,14}
\definecolor{forestgreen4416044}{RGB}{44,160,44}
\definecolor{steelblue31119180}{RGB}{31,119,180}

\begin{axis}[
axis line style={semithick},
width=0.9\textwidth,
height=6.8cm,
ylabel style={yshift=-3pt, font=\tiny} ,
xlabel style={yshift=3pt, font=\tiny} ,
tick label style={font=\tiny} , 
title style={font=\footnotesize},
ylabel style={yshift=-2pt},
  ytick distance=0.2, 
ylabel style={font=\footnotesize} ,
log basis y={10},
tick pos=left,
x grid style={darkgray176},
tick label style={/pgf/number format/assume math mode},
xmajorgrids,
xmin=17, xmax=83,
xminorgrids,
xtick style={color=black},
y grid style={darkgray176},
ylabel={Probability of missdetection},
xlabel style={font=\footnotesize},
ymajorgrids,
yminorgrids,
ytick style={color=black},
ymode=log,
ytick={0.02,0.05,0.1,0.2,0.3},    
yticklabels={$2\cdot10^{-2}$, $5\cdot10^{-2}$, $10^{-1}$,$2\cdot10^{-1}$,$3\cdot10^{-1}$},
label = {$L=30$},
legend style={at={(0,0)}, anchor=south west, legend cell align=left, align=left, draw=white!15!black,font=\tiny,opacity=0.99, row sep=-3pt,inner sep=1pt},
legend image post style={xscale=0.6,yscale=0.75},
legend columns=6,
transpose legend,
]

\addplot [semithick, steelblue31119180, mark=*, mark size=1.9, mark options={solid}]
table {%
20 0.125789999999999
30 0.0821349999999989
40 0.0579350000000028
50 0.0417600000000043
60 0.0300600000000024
70 0.0225750000000012
80 0.0159450000000001
};
\addlegendentry{SCA $K=20$}

\addplot [thick, steelblue31119180,  mark=square*, mark size=1.9, mark options={solid}]
table {%
20 0.187473333333352
30 0.139880000000009
40 0.110130000000006
50 0.0891800000000085
60 0.0738200000000079
70 0.0619833333333394
80 0.0528566666666717
};
\addlegendentry{$K=30$}

\addplot [thick, steelblue31119180, mark=triangle*, mark size=2.4, mark options={solid}]
table {%
20 0.251209999999999
30 0.200122499999999
40 0.167692500000001
50 0.1443575
60 0.1262175
70 0.112895000000002
80 0.100322500000004
};
\addlegendentry{$K=40$}
\addplot [thick, darkorange25512714,  dashed, mark=o, mark size=1.9, mark options={fill=none,solid}]
table {%
20 0.125554999999999
30 0.0819799999999989
40 0.057645000000003
50 0.0415200000000043
60 0.0298850000000024
70 0.0222250000000012
80 0.0156700000000001
};
\addlegendentry{CWO $K=20$}
\addplot [thick, darkorange25512714, dashed, mark=square, mark size=1.9, mark options={fill=none,solid}]
table {%
20 0.187243333333352
30 0.139580000000009
40 0.109766666666673
50 0.0887766666666752
60 0.0733633333333412
70 0.0615666666666726
80 0.0523466666666717
};
\addlegendentry{$K=30$}
\addplot [thick, darkorange25512714,  dashed, mark=triangle, mark size=1.9, mark options={fill=none,solid}]
table {%
20 0.2510425
30 0.199802499999999
40 0.167287500000001
50 0.14411
60 0.1258675
70 0.112462500000002
80 0.0998450000000045
};
\addlegendentry{$K=40$}

\addplot [semithick, forestgreen4416044, mark=*, mark size=1.9, mark options={fill=none,solid}]
table {%
20 0.124834999999999
30 0.078829999999999
40 0.0544750000000039
50 0.0385750000000036
60 0.0270450000000017
70 0.0200250000000007
80 0.0143299999999998
};
\addlegendentry{CL-MP $K=20$}
\addplot [semithick, forestgreen4416044,  mark=square*, mark size=1.9, mark options={fill=none,solid}]
table {%
20 0.199136666666683
30 0.146326666666676
40 0.113180000000006
50 0.0916766666666746
60 0.0756900000000072
70 0.0630133333333391
80 0.0528466666666716
};
\addlegendentry{$K=30$}
\addplot [semithick, forestgreen4416044, mark=triangle*, mark size=2.5, mark options={fill=none,solid}]
table {%
20 0.278042500000001
30 0.220442499999999
40 0.184515
50 0.159612500000001
60 0.14104
70 0.125972500000001
80 0.113530000000002
};
\addlegendentry{$K=40$}
\addplot [semithick, crimson2143940, mark = o, mark size=1.9, mark options={fill=none,solid}]
table {%
20 0.124459999999999
30 0.080924999999999
40 0.0572250000000032
50 0.0410050000000042
60 0.0297100000000023
70 0.0219350000000011
80 0.0157900000000001
};
\addlegendentry{EM $K=20$}
\addplot [semithick, crimson2143940,  mark=square, mark size=1.9, mark options={fill=none,solid}]
table {%
20 0.186616666666685
30 0.139526666666675
40 0.110246666666673
50 0.0900900000000085
60 0.0753666666666749
70 0.0642800000000064
80 0.0551900000000051
};
\addlegendentry{$K=30$}
\addplot [semithick, crimson2143940,  mark=triangle, mark size=2.5, mark options={fill=none,solid}]
table {%
20 0.250272499999999
30 0.200387499999999
40 0.168617500000001
50 0.14667
60 0.1293675
70 0.116985000000001
80 0.105620000000003
};
\addlegendentry{$K=40$}
\end{axis}

\end{tikzpicture}}
\centerline{\begin{tikzpicture}

\definecolor{crimson2143940}{RGB}{214,39,40}
\definecolor{darkgray176}{RGB}{176,176,176}
\definecolor{darkorange25512714}{RGB}{255,127,14}
\definecolor{forestgreen4416044}{RGB}{44,160,44}
\definecolor{steelblue31119180}{RGB}{31,119,180}

\begin{axis}[
axis line style={semithick},
width=0.9\textwidth,
height=6.8cm,
ylabel style={yshift=-3pt, font=\footnotesize} ,
xlabel style={yshift=3pt,font=\footnotesize},
tick label style={font=\tiny} , 
title style={font=\footnotesize},
ylabel style={yshift=-2pt},
  ytick distance=0.2, 
log basis y={10},
ytick={0.1,0.2,0.3,0.4,0.5},    
yticklabels={$10^{-1}$, $2\cdot10^{-1}$, $3 \cdot 10^{-1}$,$4\cdot10^{-2}$,$5\cdot10^{-1}$},
tick pos=left,
x grid style={darkgray176},
xlabel={$M$ (number of antennas)},
xmajorgrids,
xmin=17, xmax=83,
xminorgrids,
xtick style={color=black},
y grid style={darkgray176},
ylabel={Probability of missdetection},
ymajorgrids,
yminorgrids,
ymode=log,
ytick style={color=black},
legend style={at={(0,0)}, anchor=south west, legend cell align=left, align=left, draw=white!15!black,font=\tiny,opacity=0.99, row sep=-3.5pt,inner sep=1pt},
legend image post style={xscale=0.62,yscale=0.77},
legend columns=6,
transpose legend,
]
\addplot [semithick, steelblue31119180, mark=*, mark size=1.9, mark options={solid}]
table {%
20 0.271695
30 0.217400000000006
40 0.180610000000009
50 0.155265000000006
60 0.135505000000001
70 0.120734999999999
80 0.107914999999998
};
\addlegendentry{SCA $K=20$}

\addplot [thick, steelblue31119180,  mark=square*, mark size=1.9, mark options={solid}]
table {%
20 0.356660000000041
30 0.301773333333353
40 0.264136666666669
50 0.238023333333333
60 0.21761333333334
70 0.201290000000017
80 0.186693333333352
};
\addlegendentry{$K=30$}

\addplot [thick, steelblue31119180, mark=triangle*, mark size=2.4, mark options={solid}]
table {%
20 0.419080000000001
30 0.364144999999999
40 0.329182499999998
50 0.303425000000002
60 0.282250000000002
70 0.266642500000001
80 0.2517775
};
\addlegendentry{$K=40$}
\addplot [thick, darkorange25512714,  dashed, mark=o, mark size=1.9, mark options={fill=none,solid}]
table {%
20 0.27117
30 0.216770000000006
40 0.179975000000009
50 0.154560000000006
60 0.134620000000001
70 0.119779999999999
80 0.107099999999998
};
\addlegendentry{CWO $K=20$}
\addplot [thick, darkorange25512714, dashed, mark=square, mark size=1.9, mark options={fill=none,solid}]
table {%
20 0.356300000000041
30 0.301186666666687
40 0.263523333333335
50 0.237379999999999
60 0.216596666666674
70 0.200043333333351
80 0.185360000000018
};
\addlegendentry{$K=30$}
\addplot [thick, darkorange25512714,  dashed, mark=triangle, mark size=1.9, mark options={fill=none,solid}]
table {%
20 0.419975000000001
30 0.364327499999999
40 0.328797499999998
50 0.302725000000002
60 0.281177500000002
70 0.265375000000001
80 0.2506975
};
\addlegendentry{$K=40$}

\addplot [semithick, forestgreen4416044, mark=*, mark size=1.9, mark options={fill=none,solid}]
table {%
20 0.28817
30 0.227870000000005
40 0.188380000000008
50 0.161620000000005
60 0.139765000000001
70 0.124004999999999
80 0.109624999999997
};
\addlegendentry{CL-MP $K=20$}
\addplot [semithick, forestgreen4416044,  mark=square*, mark size=1.9, mark options={fill=none,solid}]
table {%
20 0.395536666666708
30 0.334100000000034
40 0.295383333333349
50 0.267806666666671
60 0.24664
70 0.230676666666669
80 0.216193333333341
};
\addlegendentry{$K=30$}
\addplot [semithick, forestgreen4416044, mark=triangle*, mark size=2.5, mark options={fill=none,solid}]
table {%
20 0.482875000000001
30 0.42418
40 0.389365
50 0.365359999999998
60 0.347122500000001
70 0.334384999999998
80 0.322889999999998
};
\addlegendentry{$K=40$}
\addplot [semithick, crimson2143940, mark = o, mark size=1.9, mark options={fill=none,solid}] 
table {%
20 0.27074
30 0.218785000000006
40 0.18477500000001
50 0.161240000000007
60 0.144280000000003
70 0.131660000000001
80 0.121054999999998
};
\addlegendentry{EM $K=20$}
\addplot [semithick, crimson2143940,  mark=square, mark size=1.9, mark options={fill=none,solid}]
table {%
20 0.355896666666708
30 0.303346666666687
40 0.269196666666671
50 0.244849999999999
60 0.227170000000003
70 0.213526666666677
80 0.201503333333351
};
\addlegendentry{$K=30$}
\addplot [semithick, crimson2143940,  mark=triangle, mark size=2.5, mark options={fill=none,solid}] 
table {%
20 0.417022500000003
30 0.364854999999999
40 0.331527499999998
50 0.308527500000001
60 0.289380000000002
70 0.275885000000002
80 0.263747500000002
};
\addlegendentry{$K=40$}

\end{axis}

\end{tikzpicture}}
\end{subfigure}
\caption{Probability of missed detection  vs. $M$ for different numbers of active devices ($K$) when $L=20,30, $ and $L=50$ from bottom to top, respectively.} \label{fig:AD1}
\end{figure}
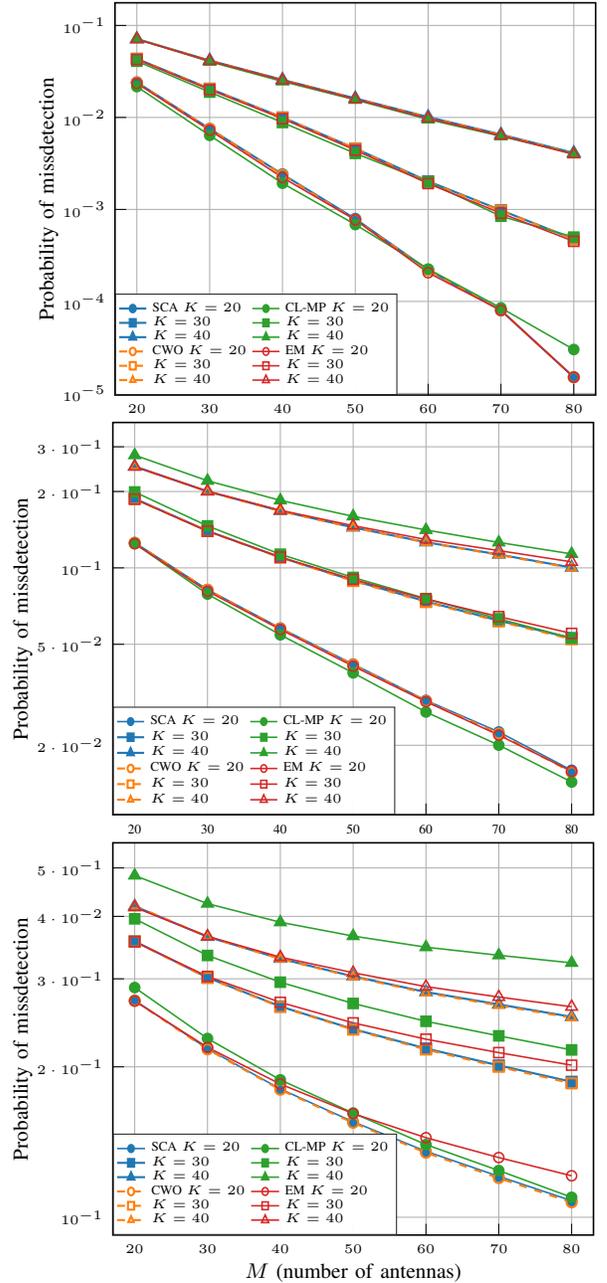 

We investigate the effects of $L$, $M$, and $K$ on the performance of JADCE. Fig.~\ref{fig:AD1} shows the probability of missed detection as a function of the number of antennas $M$, for different numbers of active devices $K$ and pilot lengths $L=20,30, $ and $L=50$ from bottom to top, respectively. When $L$ is small (e.g., $L=20$ in the bottom panel), it can be observed that CL-SCA and CWO are the best performing AD methods across all values of $K$. 

We also observe that while greedy CL-MP performs well when $K$ is small, its performance deteriorates as $K$ increases for $L=20$ and $L=30$. This study shows that CL‑SCA and CWO achieve very similar missed‑detection performance, while the EM algorithm of M-SBL performs slightly worse, with noticeable performance gaps for short pilot lengths (e.g., $L=20$).  The average running times for all the methods tested, for pilot lengths $L=20$ and $L=50$, are shown in Fig.~\ref{fig:CPU}. These results show that the proposed CL‑SCA method has a clear run time advantage over CWO, often being roughly twice as fast for the pilot lengths considered. The original EM implementation of M‑SBL is approximately one order of magnitude slower than CL-SCA, primarily because it requires several hundred iterations to converge. CL-MP is the computationally fastest method, as it is a greedy method that runs for exactly  $K$ iterations.

\begin{figure}[!t]
\centerline{\begin{tikzpicture}

\definecolor{crimson2143940}{RGB}{214,39,40}
\definecolor{darkgray176}{RGB}{176,176,176}
\definecolor{darkorange25512714}{RGB}{255,127,14}
\definecolor{forestgreen4416044}{RGB}{44,160,44}
\definecolor{steelblue31119180}{RGB}{31,119,180}

\begin{axis}[
width= 0.4\columnwidth, 
height= 4.1cm, 
 y tick label style={font=\notsotiny} , 
 x tick label style={font=\footnotesize} , 
scale only axis,
bar shift auto,
tick pos=left,
ylabel style={yshift=-3pt, font=\footnotesize} ,
xlabel style={yshift=3pt, font=\footnotesize} ,
x grid style={darkgray176},
xlabel={$K$ (number of active devices)},
xmin=-0.54, xmax=2.54,
xtick style={color=black},
xtick={0,1,2},
xticklabels={20,30,40},
y grid style={darkgray176},
ylabel={Running time [s]},
             yticklabel style={%
                 /pgf/number format/.cd,
                     fixed,
                     fixed zerofill,
                     precision=2,
                     },
ymajorgrids,
title = {(a) $L=20$},
title style={yshift=-5pt, font=\footnotesize} ,
ymin=0, ymax=0.1005693720439402,
ytick style={color=black},
legend style={at={(0,1)}, anchor=north west, legend cell align=left, align=left, draw=white!15!black,font=\tiny,opacity=0.89, row sep=-1.5pt,inner sep=1pt},
legend image post style={xscale=0.54,yscale=0.73},
legend columns=6,
]
\draw[draw=none,fill=steelblue31119180] (axis cs:-0.4,0) rectangle (axis cs:-0.2,0.00925378135222543);
\addlegendimage{area legend, draw=none, fill=steelblue31119180}
\addlegendentry{SCA}

\draw[draw=none,fill=steelblue31119180] (axis cs:0.6,0) rectangle (axis cs:0.8,0.0102936363790384);
\draw[draw=none,fill=steelblue31119180] (axis cs:1.6,0) rectangle (axis cs:1.8,0.01057401837352);
\draw[draw=none,fill=darkorange25512714] (axis cs:-0.2,0) rectangle (axis cs:-2.77555756156289e-17,0.0173838727932756);
\addlegendimage{area legend, draw=none, fill=darkorange25512714}
\addlegendentry{CWO}

\draw[draw=none,fill=darkorange25512714] (axis cs:0.8,0) rectangle (axis cs:1,0.0225201869317863);
\draw[draw=none,fill=darkorange25512714] (axis cs:1.8,0) rectangle (axis cs:2,0.0258311132197069);
\draw[draw=none,fill=forestgreen4416044] (axis cs:-2.77555756156289e-17,0) rectangle (axis cs:0.2,0.00205933255521315);
\addlegendimage{area legend, draw=none, fill=forestgreen4416044}
\addlegendentry{CL-MP}

\draw[draw=none,fill=forestgreen4416044] (axis cs:1,0) rectangle (axis cs:1.2,0.0032217520030681);
\draw[draw=none,fill=forestgreen4416044] (axis cs:2,0) rectangle (axis cs:2.2,0.0041242294733213);
\draw[draw=none,fill=crimson2143940] (axis cs:0.2,0) rectangle (axis cs:0.4,0.0857423786827563);
\addlegendimage{area legend, draw=none, fill=crimson2143940}
\addlegendentry{EM}

\draw[draw=none,fill=crimson2143940] (axis cs:1.2,0) rectangle (axis cs:1.4,0.0900660686132764);
\draw[draw=none,fill=crimson2143940] (axis cs:2.2,0) rectangle (axis cs:2.4,0.0890218709360941);
\end{axis}

\end{tikzpicture}
\begin{tikzpicture}

\definecolor{crimson2143940}{RGB}{214,39,40}
\definecolor{darkgray176}{RGB}{176,176,176}
\definecolor{darkorange25512714}{RGB}{255,127,14}
\definecolor{forestgreen4416044}{RGB}{44,160,44}
\definecolor{steelblue31119180}{RGB}{31,119,180}

\begin{axis}[
width= 0.4\columnwidth, 
height= 4.1cm, 
 y tick label style={font=\notsotiny} , 
 x tick label style={font=\footnotesize} , 
scale only axis,
bar shift auto,
tick pos=left,
ylabel style={yshift=-3pt, font=\footnotesize} ,
xlabel style={yshift=3pt, font=\footnotesize} ,
x grid style={darkgray176},
xlabel={$K$ (number of active devices)},
xmin=-0.54, xmax=2.54,
xtick style={color=black},
xtick={0,1,2},
xticklabels={20,30,40},
             yticklabel style={%
                 /pgf/number format/.cd,
                     fixed,
                     fixed zerofill,
                     precision=2,
                     },
y grid style={darkgray176},
ymajorgrids,
title = {(b) $L=50$},
title style={yshift=-5pt, font=\footnotesize} ,
ymin=0, ymax=0.403,
ytick style={color=black},
legend style={at={(0,1)}, anchor=north west, legend cell align=left, align=left, draw=white!15!black,font=\tiny,opacity=0.89, row sep=-1.5pt,inner sep=1pt},
legend image post style={xscale=0.54,yscale=0.73},
legend columns=6,
]
\draw[draw=none,fill=steelblue31119180] (axis cs:-0.4,0) rectangle (axis cs:-0.2,0.0160815896962503);
\addlegendimage{area legend, draw=none, fill=steelblue31119180}
\addlegendentry{SCA}

\draw[draw=none,fill=steelblue31119180] (axis cs:0.6,0) rectangle (axis cs:0.8,0.0172023235081828);
\draw[draw=none,fill=steelblue31119180] (axis cs:1.6,0) rectangle (axis cs:1.8,0.0192481257167379);
\draw[draw=none,fill=darkorange25512714] (axis cs:-0.2,0) rectangle (axis cs:-2.77555756156289e-17,0.0280762739162693);
\addlegendimage{area legend, draw=none, fill=darkorange25512714}
\addlegendentry{CWO}

\draw[draw=none,fill=darkorange25512714] (axis cs:0.8,0) rectangle (axis cs:1,0.031618528309052);
\draw[draw=none,fill=darkorange25512714] (axis cs:1.8,0) rectangle (axis cs:2,0.0367708938700879);
\draw[draw=none,fill=forestgreen4416044] (axis cs:-2.77555756156289e-17,0) rectangle (axis cs:0.2,0.00627632606787208);
\addlegendimage{area legend, draw=none, fill=forestgreen4416044}
\addlegendentry{CL-MP}

\draw[draw=none,fill=forestgreen4416044] (axis cs:1,0) rectangle (axis cs:1.2,0.00940416258622759);
\draw[draw=none,fill=forestgreen4416044] (axis cs:2,0) rectangle (axis cs:2.2,0.0124603366222565);
\draw[draw=none,fill=crimson2143940] (axis cs:0.2,0) rectangle (axis cs:0.4,0.323906862421327);
\addlegendimage{area legend, draw=none, fill=crimson2143940}
\addlegendentry{EM}

\draw[draw=none,fill=crimson2143940] (axis cs:1.2,0) rectangle (axis cs:1.4,0.346311635551615);
\draw[draw=none,fill=crimson2143940] (axis cs:2.2,0) rectangle (axis cs:2.4,0.361479112333073);
\end{axis}

\end{tikzpicture}}
\caption{Average running time vs. the number of active devices $K$ when $L=20, $ and $L=50$.}
\label{fig:CPU}
\end{figure}
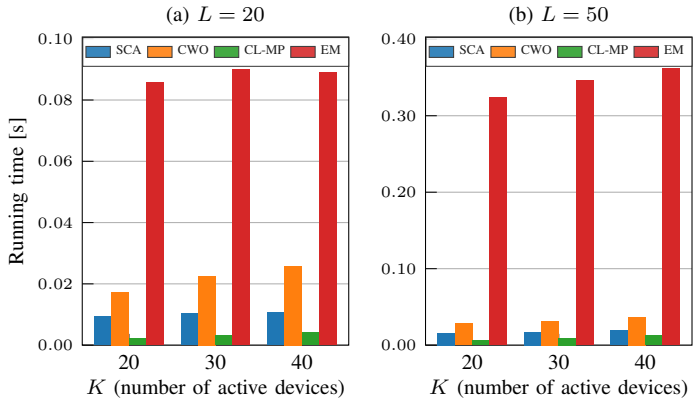

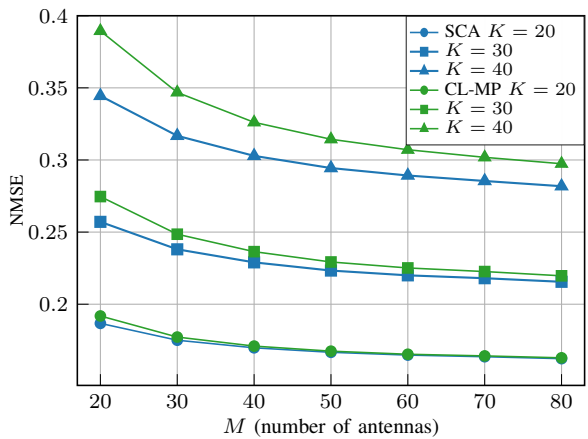
\begin{figure}[!t]
\centerline{\begin{tikzpicture}

\definecolor{darkgray176}{RGB}{176,176,176}
\definecolor{forestgreen4416044}{RGB}{44,160,44}
\definecolor{steelblue31119180}{RGB}{31,119,180}

\begin{axis}[
axis line style={semithick},
width=0.94\columnwidth,
height=6.5cm,
ylabel style={yshift=-3pt, font=\scriptsize} ,
xlabel style={yshift=3pt, font=\footnotesize} ,
y tick label style={font=\footnotesize} , 
x tick label style={font=\footnotesize}, 
title style={font=\footnotesize},
ylabel style={yshift=-2pt},
  ytick distance=0.2, 
tick pos=left,
x grid style={darkgray176},
xlabel={$M$ (number of antennas)},
xmajorgrids,
xmin=17, xmax=83,
xminorgrids,
xtick style={color=black},
y grid style={darkgray176},
ylabel={NMSE},
ymajorgrids,
ymin=0.143710783012801, ymax=0.4,
ytick={0.2,0.25,0.3,0.35,0.4}, 
yminorgrids,
ytick style={color=black},
legend style={at={(1,1)}, anchor=north east, legend cell align=left, align=left, draw=white!15!black,font=\scriptsize,opacity=0.99, row sep=-3pt,inner sep=1pt},
legend image post style={xscale=0.63,yscale=0.77},
legend columns=6,
transpose legend,
]

\addplot [semithick, steelblue31119180, mark=*, mark size=1.9, mark options={solid}]
table {%
20 0.192646808944643
30 0.181101317922742
40 0.175534146371809
50 0.17210221397645
60 0.169822792904592
70 0.168349778245402
80 0.16677817607422
};
\addlegendentry{SCA $K=20$}

\addplot [thick, steelblue31119180,  mark=square*, mark size=1.9, mark options={solid}]
table {%
20 0.258402621639376
30 0.241067197503111
40 0.232573247599855
50 0.22715785877979
60 0.22387183341594
70 0.221811532212556
80 0.219232778579905
};
\addlegendentry{$K=30$}
\addplot [thick, steelblue31119180, mark=triangle*, mark size=2.4, mark options={solid}]
table {%
20 0.340790720364443
30 0.316222208931332
40 0.303615300363223
50 0.29598911775393
60 0.291205816625278
70 0.287661737586766
80 0.284225362062344
};
\addlegendentry{$K=40$}
\addplot [semithick, forestgreen4416044, mark=*, mark size=1.9, mark options={fill=none,solid}]
table {%
20 0.191854548419014
30 0.177232390756661
40 0.170974412265132
50 0.167448674157034
60 0.165310703149715
70 0.164168902933728
80 0.162835681333449
};
\addlegendentry{CL-MP $K=20$}

\addplot [semithick, forestgreen4416044,  mark=square*, mark size=1.9, mark options={fill=none,solid}]
table {%
20 0.274636032858597
30 0.248532468758153
40 0.23643879039714
50 0.229277430841471
60 0.225139974404685
70 0.222641038443056
80 0.219693730210649
};
\addlegendentry{$K=30$}

\addplot [semithick, forestgreen4416044, mark=triangle*, mark size=2.5, mark options={fill=none,solid}]
table {%
20 0.389446209846489
30 0.34682906949594
40 0.32609344904037
50 0.314353992075277
60 0.307092994188484
70 0.30182809478466
80 0.297398423440109
};
\addlegendentry{$K=40$}

\end{axis}

\end{tikzpicture}}
\caption{Channel estimation NMSE vs. the number of antennas $M$ for different numbers of active devices $K$ when pilot length is $L=30$.} \label{fig:NMSE}
\end{figure} 

Fig.~\ref{fig:NMSE} shows the channel estimation performance for a pilot length of $L=30$. Here, we only compare  CL-SCA and CL-MP.  
As can be observed, only in the settings where CL-MP provides slightly better AD performance ($K=20$ and $L=30$ as shown in the middle panel of Fig.~\ref{fig:AD1}), its channel estimation performance is slightly better than that of CL-SCA. For $K=40$, CL-SCA provides clearly improved performance while for $K=30$ the differences in channel estimation performance are marginal. We have not included the CWO or EM approaches in this figure, as their curves overlap with that of CL-SCA.

\section{Conclusion} 
We have studied joint activity detection and channel estimation for massive random access in uplink mMTC scenarios using the CL-based SBL framework. Based on successive convex approximation framework, we have developed the CL-SCA algorithm by decomposing the scaled negative LLF into a convex and a nonconvex term, linearizing the latter, minimizing the resultant objective function, and updating all power variables in parallel to achieve an efficient iterative method. Simulation results show that CL-SCA often achieves similar or superior performance in terms of the probability of missed detection compared to CWO, the EM algorithm of M-SBL, and CL-MP. While the average running time of greedy CL-MP is generally smaller than that of CL-SCA  in the ideal case when $K$ is known and $K<I_{\text{iters}}$, its channel estimation performance is typically worse. It can be concluded from simulation results that the proposed CL-SCA algorithm is fastest among optimization based methods  and  provides top activity detection and channel estimation performance.   


\clearpage
\appendix[Proof of Theorem~\ref{th:SCA}]\label{app:th:SCA}
First we note that 
\begin{align}
 \nabla_{\gamma_i} \, \ffnc(\gam) &=  \frac{\partial\log|\M |}{\partial\gamma_i} 
 = \mathrm{tr}\bigg({\M^{-1} \frac{\partial \M}{\partial\gamma_i}}\bigg) 
  =   \a_i^\hop   \M^{-1} \a_i \notag \\ 
  &= \a_i^\hop \b_i  \label{eq:partial_gammai_ncvx} \\ 
   \nabla_{\gamma_i}\gfnc(\gam)  &= - \frac{ \c_i^\hop \S \c_i}{(1 + \gamma_i\,\a_i^\Hmat \c_i)^2} 
   \label{eq:partial_gammai_ncvx2}
\end{align} 
where \eqref{eq:partial_gammai_ncvx2} follows by taking the derivative of \eqref{eq:f_cvx_gammai}. 
With the help of \eqref{eq:f_cvx_gammai} and \eqref{eq:partial_gammai_ncvx} we can write \eqref{eq:tilde_f} compactly as 
\begin{align}  \label{eq_tildef_i}
\tilde{\ell}_i(\gamma_i \mid \gam^{\iter}) 
&= - \frac{\gamma_i [\c_i^{\iter}]^\Hmat \S \c_i^{\iter}}{1 + \gamma_i\,\a_i^{\hop} \c_i^{\iter}} + \gamma_i\,\bm{a}_i^{\hop} \b_i^{\iter}  
\end{align}
where $\b_i^{\iter}$ and $\c_i^{\iter} $, defined  by 
\begin{align} 
\b_i^{\iter} &= \big(\M^{\iter} \big)^{-1} \a_i 
= (\A  \diag(\gam^{\iter}) \A^\hop + \sigma^{2}\mathbf{I})^{-1} \a_i \\
\c_i^{\iter} &=\big(\M^{\iter}_{\setminus i} \big)^{-1} \a_i  = (\M^{\iter} - \gamma_i^{\iter}  \a_i \a_i^{\hop})^{-1}  \a_i
\end{align} 
 are the values of vectors $\b_i$ and $\c_i$   evaluated at  $\gam = \gam^{\iter}$ and $\M^{\iter} = \A \diag(\gam^{\iter}) \A^{\hop} + \sigma^2 \mathbf{I}$. 
Note also that we have ignored additive constants in \eqref{eq_tildef_i}  that do not depend on the optimization variable $\gamma_i$.

Taking the derivative of $\tilde{\ell}_i(\gamma_i \mid \gam^{\iter})$ in \eqref{eq_tildef_i} w.r.t. variable $\gamma_i$ and setting it to zero yields the equation
\begin{align*}
0 &=  \nabla_{\gamma_i} \tilde{\ell}_i(\gamma_i \mid \gam^{\iter} ) = - \frac{[\c_i^{\iter}]^\Hmat \S \c_i^{\iter}}{ (1 + \gamma_i\,\a_i^{\hop} \c_i^{\iter})^2} + \bm{a}_i^{\hop} \b_i^{\iter} , 
\end{align*} 
which has a single root: 
\beq \label{eq:gamma_i_update}
\tilde \gamma_i(\gam^k)  = \sqrt{\frac{ [\c_i^{\iter}]^{\hop} \S \c_i^{\iter}}{\bm{a}_i^{\hop} \b_i^{\iter} (\bm{a}_i^\Hmat \c_i^{\iter})^2}} - \frac{1}{\bm{a}_i^\Hmat \c_i^{\iter}} . 
\eeq
In SCA approach, each of the powers are updated {\it simultaneously} (in parallel). 
This expression is not convenient due to the term $\c_i^{\iter}$ as it requires the computation of all the inverses $\big(\bm{\Sigma}_{\backslash i}^k\big)^{-1}$ for each coordinate $i=1,\dots,M$.

Applying the Sherman-Morrison formula  to $\M_{\setminus i}^k = \M^k - \gamma_i^k \a_i \a_i^{\hop}$ we notice that vectors $\tildeb_i^k$ and $\b_i^k$ are parallel: 
\beq \label{eq:tildeb}
\tildeb_i^k = \frac{1}{1-\gamma_i^k \a_i^\hop \b_i^k}  \cdot   \b_i^k. 
\eeq 
It follows from \eqref{eq:tildeb} that 
  \beq \label{eq:SMformula3}
  \a_i^{\hop} \tildeb_i^k =  \frac{  \a_i^{\hop}  \b_i^k}{1-\gamma_i^k \a_i^\hop \b_i}
  \eeq  
  and hence inverting this relation yields: 
 \beq  
 \frac{1}{ \a_i^{\hop} \tildeb_i^{\iter}}  = \frac{1}{\a_i^\hop \b_i^{\iter}} -  \gamma_i^{\iter} . \label{eq:gamma_i_sol1_apu_2} 
 \eeq 
Again, since  $\tildeb_i^k \propto \b_i^k$  as shown in \eqref{eq:tildeb},  the following must hold:
\begin{align}
 \frac{[\tildeb_i^{\iter}]^{\hop} \SCM \tildeb_i^{\iter}}{  (\a_i^{\hop} \tildeb_i^{\iter})^2} &=  \frac{ [\b_i^{\iter}] ^{\hop} \SCM \b_i^{\iter}}{  (\a_i^{\hop} \b_i^{\iter})^2} . 
  \label{eq:gamma_i_sol1_apu_1} 
\end{align} 
Then  substituting  \eqref{eq:gamma_i_sol1_apu_1} and \eqref{eq:gamma_i_sol1_apu_2}  into \eqref{eq:gamma_i_update} allows us to write the root in an equivalent form 
\begin{align} \label{eq:60}
\tilde{\gamma_i}(\gam^k)
&= \gamma_i^{\iter}  + \sqrt{\frac{ [\b_i^{\iter}]^{\hop} \SCM \b_i^{\iter}}{ \big(\a_i^\hop \b_i^{\iter} \big)^3 }}  -   \frac{1}{\a_i^\hop \b_i^{\iter}} . 
\end{align}
If  $\tilde \gamma_i \ge 0$, the unconstrained minimizer is feasible (optimal), while if $\tilde \gamma_i < 0$, then $\nabla_{\gamma_i} \tilde f_i(\gamma_i) > 0$ for all $\gamma_i > \tilde \gamma_i$, in particular for all $\gamma_i \ge 0$. Thus the solution is 
$\hat \gamma_i(\gam^k)= \max( \tilde{\gamma_i}(\gam^k),0)$ as given in  \eqref{eq:SCA_optim_gammai}. 

\end{document}